\newtheorem{proposition}{Proposition}[section]
\newtheorem{remark}{Remark}[section]
\newtheorem{definition}{Definition}[section]
\newtheorem{theorem}{Theorem}[section]
\newtheorem{assumption}{Assumption}
\title{\LARGE \bf On feedback passivation under sampling}
\author{Mattia Mattioni$^1$, Alessio Moreschini$^{1, 2}$, Salvatore Monaco$^1$, Doroth\'ee Normand-Cyrot$^2$
\thanks{Supported by \emph{Universit\'e Franco-Italienne/Universit\`a Italo-Francese} (Vinci Grant 2019) and by \emph{Sapienza Universit\`a di Roma} (\emph{Progetti di Ateneo 2018}-Piccoli progetti \emph{RP11816436325B63}).}
\thanks{$^1$Dipartimento di Ingegneria Informatica, Automatica e Gestionale \emph{A. Ruberti} (Sapienza University of Rome); Via Ariosto 25, 00185 Rome, Italy {\tt\small {\{mattia.mattioni, alessio.moreschini, salvatore.monaco\}@uniroma1.it}}.
}%
\thanks{$^2$Laboratoire de Signaux et Syst\`emes (L2S, CNRS and \emph{Univ. Paris-Saclay}); 3, Rue Joliot Curie, 91192, Gif-sur-Yvette, France {\tt\small  \{alessio.moreschini,dorothee.normand-cyrot \}@centralesupelec.fr}}%
}
\begin{document}

\maketitle
\thispagestyle{empty}
\pagestyle{empty}

\begin{abstract} 
In this paper we show that feedback passivation under sampling can be preserved under digital control through the redefinition of a passifying output map which depends on the sampling period. The design is constructive and approximate solutions make sense. The procedure is applied to port Hamiltonian dynamics and Interconnection and Damping Assignment feedback. Performances are  illustrated over the gravity pendulum example.
\end{abstract}

\begin{keywords}
Sampled-data control, Stability of nonlinear systems, Computer-aided control design.
\end{keywords}

\section{Introduction}

Energy-Based (EB) approaches constitute a physically inspired powerful setting for the design of control systems (see \cite{ortega2001putting,ortega2002interconnection,hatanaka2015passivity} and references therein). The basic idea is to achieve stabilization to a desired equilibrium and to assign the transient behaviour via energy transfer between the interconnected parts of the dynamical system. In this sense, these strategies enlarge the stabilizing goal of standard passivity-based control (PBC) and can be seen as the second generation of PBC with broad applicability to various domains as discussed in \cite{hatanaka2015passivity} with reference to robotics. As passivity is the core of PBC, feedback passivation is the underlying leitmotiv of energy-based control at large. More precisely, feedback passivation relies on the design of a controller that shapes the dissipation according to a given target storage  (or energy-like) function while ensuring closed-loop passivity with respect to a suitably defined output. Feedback passivation is in fact instrumental in celebrated nonlinear control strategies such as  backstepping or feedforwarding regarding cascade dynamics \cite{sepulchre2012constructive} or energy balance and Interconnection and Damping Assignment (IDA-PBC) for Hamiltonian dynamics at large \cite{ortega2002interconnection,ortega2002stabilization,ortega2008control}.

All of this essentially concerns the continuous-time framework whereas existing results in discrete time are few. This essentially results from the fact that the property of passivity itself faces the difficulty of necessary input dependency of the output map and the generic nonlinearity of the functions involved in the  computation of control solutions. However, stimulated by an increasing interest toward computer-oriented designs, several passivity-based digital control strategies are developed in the discrete-time literature \cite{stramigioli2005sampled, aoues2017Hamiltonian}, for cascade or interconnected dynamics also, but with no universal agreement on these methods. The main reason seems to be the lack of a clear underlying link among these methods and of the instrumental definition of a feedback passivation procedure in discrete time. We propose to set a bridge between these continuous-time and discrete-time strategies through the understanding of feedback passivation under sampling. In this context, the present paper investigates the very immediate question: as soon as the assignment of a certain target energy function is involved in the continuous-time design, is it possible to preserve this goal under
digital control?
We recall that sampled-data dynamics, issued from the sampling of continuous-time ones under piecewise constant control, are very attractive because of their practical interest but also because endorsing some of the properties of the original continuous-time plant  \cite{monaco2010sampled}.

To discuss this question, we make reference to the notion of average passivity that we introduced in \cite{monaco2011nonlinear} to overcome the input-to-output obstruction and the concept of matching a target behaviour rather than the complete state dynamics under digital control and at all sampling times. In \cite{monaco2010sampled},  we showed that average passivity is recovered under sampling when making reference to a modified output map, that comes to depend on the sampling period. In \cite{tanasa2015backstepping}, we shown that digital Lyapunov-based design strategies (e.g., backstepping) can be worked out in terms of Input-Lyapunov matching at the sampling instants, through suitably computed digital control laws, that come out to depend on the sampling period too. The combined use of average passivity and input-Lyapunov matching was then further developed for digital PBC stabilization at the origin via output damping \cite{monaco2010sampled} and IDA-PBC of Hamiltonian dynamics \cite{tiefensee2010ida}.

 Hereinafter we show that whenever a dynamics is feedback-passive in continuous time, it is feedback average passive under sampling with the same target storage, but with respect to a modified  passifying output, explicitly depending on the control and the sampling period. The solution we provide is constructive for the digital feedback that is defined by its series expansion in powers of the sampling period,  around the continuous-time solutions. As exact solutions are seldom computable in practice,  approximate controllers can easily be defined by truncating the series solution at any desired order.  When applied to the class of port-controlled Hamiltonian (pcH) systems and IDA-PBC control, the proposed passivation design extends previous results in \cite{tiefensee2010ida}.
Stabilization of the simple gravity pendulum at any desired point is worked out to illustrate the result and the effectiveness of approximate feedback solutions with respect to standard emulation.

The paper is organized as follows. Preliminaries are given in  Section \ref{sec:recalls} and the problem is formally stated. The main result is in Section \ref{sec:main} and further specified in \ref{sec:pcH} for pcH systems. A simulated example is carried out in Section \ref{sec:ex}. Conclusions and perspectives are in Section \ref{sec:conc}.
\smallskip

\emph{Notations. } Functions and vector fields are assumed smooth and complete over the respective definition spaces. $\mathbb{R}$ and $\mathbb{N}$ denote the set of real and natural numbers including $0$. For any vector $z \in \mathbb{R}^n$, $\|z\|$ and $z^\top$ define respectively the norm and transpose of $z$.  $I_d$ and $\text{I}$ and denote respectively the identity matrix and identity operator. $\mathrm{L}_f=\sum_{i=1}^{n}f(\cdot)\frac{\partial}{\partial x_i}$ denotes the Lie derivative and $e^{\mathrm{L}_f}=I + \sum_{i\geq 1}\frac{\mathrm{L}_{f}^i}{i!}$ the exponential  Lie series operator, associated with the vector field $f$. Given two vector fields $f(x)$, $g(x)$, $ad_f g (x)= (\mathrm{L}_f \mathrm{L}_g - \mathrm{L}_g \mathrm{L}_f)(x)$ denotes their Lie bracket. Given a twice continuously differentiable function $S(\cdot):\mathbb{R}^n\to\mathbb{R}$, $\nabla S$ represents its gradient (column) vector and $\nabla^2 S$ its Hessian matrix.  A function $R(x,\delta)= O(\delta^p)$ is said of order $\delta^p$, $p \geq 1$ if whenever it is defined it can be written as $R(x, \delta) = \delta^{p-1}\tilde R(x, \delta)$ and there exist a function $\theta \in \mathcal{K}_{\infty}$ and $\delta^* >0$ s. t. $\forall \delta\leq \delta^*$, $| \tilde R (x, \delta)| \leq \theta(\delta)$. 

\section{Recalls and Problem Statement}\label{sec:recalls}

Basic results on passivity and feedback passivity for continuous-time  and discrete-time systems are recalled.

\subsection{Feedback passivation in continuous time}
Consider the input-affine continuous-time dynamics 
\begin{align}\label{dyn1}
\dot{x}&=f(x) + g(x)u
\end{align}
with $x \in \mathbb{R}^n $, $u \in \mathbb{R}$. Let $x_\star \in \mathbb{R}^n$,  a desired equilibrium in $\{ x \in \mathbb{R}^n \text{ s.t. } g^\perp(x) f(x) = 0 \}$, with $g^\perp(\cdot)$ the maximal rank annihilator of $g(\cdot)$ ($g^\perp(x)g(x)=0$ for all $x\in\mathbb{R}^n$), and assume the dynamics feedback passive   \cite{byrnes1991passivity,sepulchre2012constructive} according to the definition below. 

\begin{assumption}\label{As1}
Given \eqref{dyn1}, there exist smooth functions $\gamma(\cdot): \mathbb{R}^n \to \mathbb{R}$, $h_d(\cdot) : \mathbb{R}^n \to \mathbb{R}$ and $S_d(\cdot): \mathbb{R}^n \to \mathbb{R}_{\geq 0}$ such that the feedback law 
	\begin{align}\label{passfeed}
	u=\gamma(x)+v
	\end{align}
	makes the closed-loop system
	\begin{subequations}\label{dyn2}
	\begin{align}\label{dyn2a}
	\dot{x}&=f_d(x) + g(x)v\\\label{out2}
	y&=h_d(x)
	\end{align}
	\end{subequations}
passive with $f_d(x) := f(x) + g(x) \gamma(x)$ and storage function $S_d(\cdot)$; i.e.  dissipation  holds for all $t \geq 0$ and $x_0 \in \mathbb{R}^n$
	\begin{align}\label{pass1}
	S_d(x(t))-S_d(x_0)\leq\int_{0}^{t}y(s) v(s) \mathrm{d}s.
	\end{align}
\end{assumption}

\smallskip 
Once feedback passivation is guaranteed whenever $S_d(x_\star) = 0$ and (\ref{dyn2}) is zero-state detectable (ZSD), then the desired equilibrium $x_\star$ can be stabilized through damping injection setting $v = -\kappa h_d(x)$  with $\kappa >0$. From the dissipation inequality (\ref{pass1}) and the Kalman-Yakubovich-Popov (KYP) property \cite{byrnes1991passivity} one easily verifies 
\begin{align}
\dot S_d(x) = \underbrace{\mathrm{L}_{f_d} S_d(x)}_{\leq 0} + v \underbrace{\mathrm{L}_g S_d(x)}_{= h_d(x)} \leq - \kappa \| h_d(x)\|^2
\end{align}
and thus asymptotic stability of $x_\star$ for (\ref{dyn2}).
From now on, with no loss of generality, we set $h_d(x) = \mathrm{L}_g S_d(x)$.

\subsection{Discrete-time passivity and problem statement}
The notion of discrete-gradient is recalled  \cite{mclachlan1999geometric}.
\begin{definition}
	\label{def:1}
	Given a smooth real-valued function $S_d(\cdot):\mathbb{R}^n\to\mathbb{R}$, its discrete gradient is a vector-valued function of two variables, $\bar{\nabla}S_d|_{x}^{z}: \mathbb{R}^n \times \mathbb{R}^n \to \mathbb{R}^n$ satisfying for all $x,z \in \mathbb{R}^n$
	\begin{align*}
	S_d(z)-S_d(x)&=	(z-x)^\top\bar{\nabla}S_d|_{x}^{z}, \quad \bar{\nabla}S_d|_{x}^{x}={\nabla}S_d(x).
	\end{align*}
\end{definition}

\smallskip
By the mean-value theorem, the discrete gradient can be computed according to the integral form
\begin{align*}
\bar{\nabla} S_d |_{x}^{z}=\int_{0}^{1 }\nabla S_d(x + \ell(z-x) ) \mathrm{d}\ell
\end{align*}
so getting the following approximation 
\begin{align}\label{discrete:approx}
\bar{\nabla}S_d|_{x}^{z} \! = \! \nabla S_d(x)\! +\! \frac{1}{2}\! \nabla^2 S_d(x)(z\! -\! x)\! +\! O(\| z-x\|^2).
\end{align}

Consider now a discrete-time dynamics 
	\begin{align}
		x_{k+1}&=x_k + F(x_k, u_k)\label{disdyn}
	\end{align}
with $x \in \mathbb{R}^n$, $u \in \mathbb{R}$ and let $x_\star \in \{ x\in \mathbb{R} \text{ s.t. } \exists u\in \mathbb{R} : \ F(x, u) = 0  \}$, an equilibrium to be stabilized.

When considering an output map $h(\cdot,u): \mathbb{R}^n \rightarrow \mathbb{R}$ depending on $u$, the following definition of discrete-time passivityis given.

\begin{definition}
	The discrete-time dynamics (\ref{disdyn}) with output $Y(x,u)=h(x,u)$ is said \emph{passive} if there exists a smooth positive definite function $S_d(\cdot): \mathbb{R}^n \to \mathbb{R}_\geq 0$  (the storage function) verifying for all $k\geq 0$ and $x_0 \in \mathbb{R}^n$ 
	\begin{align*}
	S_d(x_{k})-S_d(x_0) \leq \sum_{j = 0}^{k-1}u_j Y(x_j, u_j)  
	\end{align*}
	or, equivalently, for all $(x_k, u_k) \in \mathbb{R}^n \times \mathbb{R}$,
	\begin{align*}
	\Delta_k S_d(x) : = S_d(x_{k+1})-S_d(x_k)  \leq u_k Y(x_k, u_k).
	\end{align*}
\end{definition}

\smallskip

\noindent The one step ahead increment of the storage function along the dynamics (\ref{disdyn}) can be rewritten in terms the discrete gradient function in Definition \ref{def:1} as
\begin{align}
& S_d(x + F(x, u)) - S_d(x)= \bar \nabla^\top S_d|_{x}^{x + F(x, u)}F(x, u)\label{dtd}
\end{align}
omitting the $k$-dependency when clear from the context. It can be further characterized according to (\ref{discrete:approx}) as
\begin{align*}
 &S_d(x + F(x, u)) - S_d(x)=\nabla^\top S_d(x)F(x,u) \\
 &+ F^\top (x,u)\frac{1}{2} \nabla^2 S_d(x) F(x,u) +  O(\|F(x,u) \|^2).
\end{align*}

Hereinafter, we consider discrete-time systems issued from sampling continuous-time dynamics of the form (\ref{dyn1}) under piecewise constant control and sampled measures of the state. More precisely, we set $u(t) = u_k = u(k\delta)$ for all $t \in [k\delta, (k+1)\delta[$, $k\geq 0$, $x_k = x(k\delta)$ and sampling period $\delta \in ]0, T^\star[$. Through usual integration, the so-called equivalent sampled-data model to  (\ref{dyn1}) can be described in the form of a map as in (\ref{disdyn}) so getting
\begin{align}\label{sdequiv}
x_{k+1} = x_k + F^\delta(x_k, u_k)
\end{align}
 with  $x\! +\! F^\delta(x,u) \!=\! e^{\delta (\mathrm{L}_f + u \mathrm{L}_g)}x\! =\! x + \sum_{i >0}\frac{\delta^i}{i!}(\mathrm{L}_{f} + u\mathrm{L}_g)^ix$. 

It is easily verified that (\ref{sdequiv}) preserves neither the input affine structure of (\ref{dyn1}) nor its properties in general. Among them, passivity of (\ref{dyn1}) under sampling when setting $y_k=h_d(x_k)$ is lost because of the lack of a direct input-output link. However, as proved in  \cite{monaco2010sampled} that, whenever (\ref{dyn1}) is passive in continuous time, passivity under sampling is preserved under a new output map $$h^\delta(x, u) = \frac{1}{u}\bar \nabla^\top S_d|_{x + F^\delta(x,0)}^{x + F^\delta(x, u)} \Big(F^\delta(x, u) - F^\delta(x,0) \Big).$$

It results that preservation of feedback passivity under sampling still represents a challenging problem because the passifying output must be changed. 
Does feedback passivity of the continuous-time dynamics (\ref{dyn1}) implies some feedback passivity of the sampled-data equivalent model (\ref{sdequiv})? This is the question discussed in the sequel. 

\section{Digital passivation and stabilization}  \label{sec:main}
 
Before stating the main result, the following Proposition is recalled from \cite{tanasa2015backstepping}. 
\begin{proposition}
Let the system (\ref{dyn1}) verify Assumption \ref{As1} with storage function $S_d(\cdot): \mathbb{R}^n \to \mathbb{R}_{\geq 0}$ verifying $\mathrm{L}_g S_d(x)\neq 0$ for all $x \neq x_\star$. Then, there exists $T^\star>0$ such that for all $\delta \in [0, T^\star[$ and $k\geq 0$, the Input-$S_d$-Matching (IS$_d$M ) equality 
\begin{align}\label{IS$_d$M}
S_d(x_k\! +\! F^\delta(x_k, u_k)) \! -\! S_d(x_k)\! =\! \int_{k\delta}^{(k+1)\delta} \! \! \! \! \! \! \mathrm{L}_{f_d}S_d(x(s))\mathrm{d}s
\end{align}
with $ x(s) =e^{s \mathrm{L}_{f_d}}x|_{x_k}$, admits a unique solution $u_k = \gamma^\delta(x_k)$ in the form of a series expansion in powers of $\delta$ around $\gamma(\cdot)$; i.e. 
\begin{align}\label{asyIH$_d$M}
\gamma^\delta(x) = \gamma(x) + \sum_{i>0}\frac{\delta^i}{(i+1)!} \gamma^i(x)
\end{align}
with suitably defined smooth functions $\gamma_i(\cdot):  \mathbb{R}^n \to \mathbb{R}$.\end{proposition}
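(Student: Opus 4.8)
The plan is to read the IS$_d$M equality as an implicit scalar equation in the unknown $u_k$ and to solve it by a normalization that removes its degeneracy at $\delta=0$. By time–invariance I set $k=0$ and write $x$ for $x_k$. Since $x+F^\delta(x,u)=e^{\delta(\mathrm{L}_f+u\mathrm{L}_g)}x$, the left-hand side is $e^{\delta(\mathrm{L}_f+u\mathrm{L}_g)}S_d|_x-S_d(x)=\sum_{i\ge1}\frac{\delta^i}{i!}(\mathrm{L}_f+u\mathrm{L}_g)^iS_d(x)$, while expanding $\mathrm{L}_{f_d}S_d$ along the flow $e^{s\mathrm{L}_{f_d}}x$ and integrating term by term gives $\sum_{i\ge1}\frac{\delta^i}{i!}\mathrm{L}_{f_d}^iS_d(x)$ for the right-hand side. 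Hence the equality is equivalent to
\[
\Phi(x,u,\delta):=\sum_{i\ge1}\frac{\delta^i}{i!}\bigl[(\mathrm{L}_f+u\mathrm{L}_g)^i-\mathrm{L}_{f_d}^i\bigr]S_d(x)=0 .
\]

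Both sides vanish identically at $\delta=0$, so I cannot apply the implicit function theorem to $\Phi$ directly; instead I work with the normalized map $\tilde\Phi:=\delta^{-1}\Phi$, which extends smoothly to $\delta=0$ because every summand carries a factor $\delta^i$ with $i\ge1$. Using $f_d=f+g\gamma$, one gets $\tilde\Phi(x,u,0)=(\mathrm{L}_f+u\mathrm{L}_g)S_d(x)-\mathrm{L}_{f_d}S_d(x)=(u-\gamma(x))\,\mathrm{L}_gS_d(x)$, so $\tilde\Phi(x,\gamma(x),0)=0$ and $\partial_u\tilde\Phi(x,\gamma(x),0)=\mathrm{L}_gS_d(x)$, which is nonzero for every $x\ne x_\star$ by hypothesis. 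The implicit function theorem then yields, for each such $x$, a unique smooth solution $u=\gamma^\delta(x)$ with $\gamma^0(x)=\gamma(x)$; a locally uniform threshold $T^\star$ follows from the smooth dependence of $\tilde\Phi$ on $(x,\delta)$ on compact working sets, the point $x_\star$ being excluded precisely where $\mathrm{L}_gS_d$ degenerates.

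The series form is then obtained by matching powers of $\delta$ in $\Phi=0$ (equivalently, by Taylor–expanding the smooth solution just constructed). Writing $u=\gamma+w$ with $w=\sum_{j\ge1}\delta^j u_j$ gives $\mathrm{L}_f+u\mathrm{L}_g=\mathrm{L}_{f_d}+w\mathrm{L}_g$, so each bracket in $\Phi$ is $O(w)=O(\delta)$ and the $i$-th summand is $O(\delta^{i+1})$. Collecting the coefficient of $\delta^{j+1}$, the unknown $u_j$ enters \emph{only} through the $i=1$ term, linearly and undifferentiated, as $u_j\,\mathrm{L}_gS_d(x)$, whereas every other contribution depends on $S_d,f_d,g$ and on the already determined $u_1,\dots,u_{j-1}$. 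Because $\mathrm{L}_gS_d(x)\ne0$, this triangular recursion determines each $u_j$ uniquely as a smooth function; setting $\gamma^i=(i+1)!\,u_i$ realizes the claimed normalization $\gamma^\delta(x)=\gamma(x)+\sum_{i>0}\frac{\delta^i}{(i+1)!}\gamma^i(x)$ and simultaneously proves uniqueness.

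I expect the main obstacle to be the degeneracy at $\delta=0$: since the IS$_d$M equality collapses to $0=0$ there, the nondegeneracy test must be performed on the normalized map $\tilde\Phi$, and one must verify that $\tilde\Phi$ is genuinely smooth up to $\delta=0$ jointly in $(x,u,\delta)$, i.e. that the exponential Lie–series representations of both sides are valid asymptotic expansions uniformly on compact sets. Once this is in place, the hypothesis $\mathrm{L}_gS_d\ne0$ does all the remaining work — both as the nonvanishing partial derivative in the implicit function theorem and as the invertible coefficient at each step of the recursion — the sole delicate point being $x_\star$, where $\mathrm{L}_gS_d$ vanishes and the solution is to be understood in the limiting sense.
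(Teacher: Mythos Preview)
The paper does not actually prove this proposition; it is explicitly ``recalled from \cite{tanasa2015backstepping}'', with only the indication that the feedback is implicitly defined by the equality and that each $\gamma^i$ is obtained by solving a linear equation. Your IFT argument on the normalized map $\tilde\Phi=\delta^{-1}\Phi$, with $\partial_u\tilde\Phi|_{\delta=0}=\mathrm{L}_gS_d(x)\ne0$, is precisely the correct way to flesh this sketch out, and it is the mechanism behind both existence/uniqueness and the linear recursion the paper alludes to.

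There is, however, one genuine slip in your series-matching paragraph. When you write $u=\gamma+w$ and assert $\mathrm{L}_f+u\mathrm{L}_g=\mathrm{L}_{f_d}+w\mathrm{L}_g$, you are treating $\gamma$ as a constant. But in $(\mathrm{L}_f+u\mathrm{L}_g)^iS_d(x)$ the scalar $u$ is frozen along the iterated differentiation, whereas in $\mathrm{L}_{f_d}^iS_d(x)$ the function $\gamma(\cdot)$ is differentiated. These operators agree only at the base point, so for $i\ge2$ the bracket $[(\mathrm{L}_f+u\mathrm{L}_g)^i-\mathrm{L}_{f_d}^i]S_d(x)$ does \emph{not} vanish at $w=0$; the $i$-th summand is $O(\delta^{i})$, not $O(\delta^{i+1})$. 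Concretely, at order $\delta^2$ one computes
\[
(\mathrm{L}_{f+\gamma(x)g})^2S_d(x)-\mathrm{L}_{f_d}^2S_d(x)=-\bigl(\mathrm{L}_{f_d}\gamma\bigr)(x)\,\mathrm{L}_gS_d(x),
\]
and it is exactly this nonzero term that yields $\gamma^1=\mathrm{L}_{f_d}\gamma$ as in the paper's equation~(\ref{sd_app}); under your operator identity one would get $\gamma^1=0$. The triangular structure of the recursion still holds, but for a different reason than you state: the $\delta^i$ prefactor alone ensures that the $i\ge2$ summands contribute only $u_1,\dots,u_{j-1}$ at order $\delta^{j+1}$, while the nonzero right-hand sides come precisely from the frozen-versus-varying-$\gamma$ discrepancy.
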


\smallskip
The basic idea is to match, at all sampling instant $t = k\delta; k\geq 0$, under piecewise constant control,  the target evolution of the function $S_d(\cdot)$ along the continuous-time closed-loop dynamics (\ref{dyn2}) for $v = 0$, when setting $x_k=x(t=k\delta)$. 
The left and right hand sides of (\ref{IS$_d$M}) define the increment between two successive sampling instants of $S_d(x)$ over the sampled-data dynamics (\ref{sdequiv}) and the continuous-time one (\ref{dyn2})  respectively. Roughly speaking, the feedback $\gamma^\delta(\cdot)$ ensures matching at all sampling instants of the energy consumption along the closed-loop system (\ref{dyn2}).


The IS$_d$M  feedback is implicitly defined by the nonlinear IS$_d$M  equality (\ref{IS$_d$M}) so that seeking for exact solutions might be tough. Still, each term of the series (\ref{asyIH$_d$M}) can be computed through an iterative procedure solving, at each step, a linear equation in the unknown $\gamma^i(x)$. For the first terms, one gets
 \begin{equation}\label{sd_app}
 \begin{split}
& \gamma^1(x) = \mathrm{L}_{f_d}\gamma(x) \\
 & \gamma^2(x) = \mathrm{L}^2_{f_d} \gamma(x) + \frac{\mathrm{L}_{ad_f g} S_d(x)}{2 \mathrm{L}_g S_d(x)}\mathrm{L}_{f_d}\gamma(x) 
 \end{split}
 \end{equation}
 so that as $\delta \to 0$, $\gamma^\delta(x) \to \gamma(x)$ and recovers the continuous-time solution. 

\subsection{The main result}
\smallskip
\begin{theorem}\label{th:main}
Let (\ref{dyn1}) verify Assumption \ref{As1} with storage function $S_d(\cdot): \mathbb{R}^n \to \mathbb{R}_{\geq 0}$ and $\mathrm{L}_g S_d(x)\neq 0$ for all $x \neq x_\star$; then, there exists $T^\star>0$ such that for all $\delta \in [0, T^\star[$ and $k\geq 0$, the digital feedback
\begin{align}\label{feed_sd}
u = \gamma^\delta(x) + v
\end{align}
with $\gamma^\delta(\cdot)$ solution to the IS$_d$M equality (\ref{IS$_d$M}) makes the closed-loop system passive with storage function $S_d(\cdot)$ and passifying output 
\begin{align}\label{out_sd}
h_d^\delta(x,v) = \bar \nabla^\top S_d\big|_{x+F^\delta(x,\gamma^\delta(x))}^{x+F^\delta(x,\gamma^\delta(x)+v)} g^\delta(x,v)
\end{align}
with $vg^\delta(x,v)=F^\delta(x,\gamma^\delta(x)+v)-F^\delta(x,\gamma^\delta(x))$. In addition, if $S_d(x_\star) = 0$ and the continuous-time system with output $y = h_d(x)$ is zero-state detectable, then the feedback (\ref{feed_sd})
with damping part $v = v_\text{di}^\delta(x)$ defined as the solution to the implicit damping equality 
\begin{align}\label{damp}
\delta v + \kappa h_d^\delta(x,v) = 0, \quad \kappa >0
\end{align}
makes $x_\star$ asymptotically stable in closed loop.
\end{theorem}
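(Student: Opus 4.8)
The plan is to establish the two claims separately: discrete passivity follows from a telescoping of the one-step increment of $S_d$ at a carefully chosen intermediate point, and asymptotic stability then follows by reading the resulting dissipation inequality as a Lyapunov decrease and closing with a discrete LaSalle/zero-state-detectability argument. For the passivity part I would evaluate the increment of $S_d$ along the closed loop under $u=\gamma^\delta(x)+v$. Writing the successor as $x^{++}:=x+F^\delta(x,\gamma^\delta(x)+v)$ and the drift successor as $x^{+}:=x+F^\delta(x,\gamma^\delta(x))$, the definition of $g^\delta$ gives $x^{++}=x^{+}+v\,g^\delta(x,v)$, and I would split
\begin{align*}
\Delta S_d = \big[S_d(x^{++})-S_d(x^{+})\big] + \big[S_d(x^{+})-S_d(x)\big].
\end{align*}
Applying Definition \ref{def:1} to the first bracket with base point $x^{+}$ and increment $x^{++}-x^{+}=v\,g^\delta(x,v)$ produces exactly $v\,h_d^\delta(x,v)$ with $h_d^\delta$ as in (\ref{out_sd}). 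For the second bracket the IS$_d$M equality (\ref{IS$_d$M}) at $v=0$ gives $S_d(x^{+})-S_d(x)=\int_{k\delta}^{(k+1)\delta}\mathrm{L}_{f_d}S_d(x(s))\,\mathrm{d}s$, which is nonpositive because Assumption \ref{As1} with $v=0$ forces $\mathrm{L}_{f_d}S_d(x)\leq 0$ for all $x$. Hence $\Delta S_d \leq v\,h_d^\delta(x,v)$, i.e. discrete passivity with storage $S_d$ and output $h_d^\delta$.

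Before closing with damping I must verify that (\ref{damp}) is solvable. Expanding $F^\delta$ in powers of $\delta$ shows $g^\delta(x,v)=\delta\,g(x)+O(\delta^2)$ and $\bar\nabla S_d|_{x^+}^{x^{++}}=\nabla S_d(x)+O(\delta)$, so that $h_d^\delta(x,v)=\delta\big(h_d(x)+O(\delta)\big)$ with $h_d=\mathrm{L}_g S_d$. Factoring $\delta$ out of (\ref{damp}) reduces it to $v+\kappa\,\tilde h_d^\delta(x,v)=0$, where $\tilde h_d^\delta:=\delta^{-1}h_d^\delta$ is smooth and satisfies $\tilde h_d^{0}(x,v)=h_d(x)$, independent of $v$. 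At $\delta=0$ this equation has the unique root $v=-\kappa h_d(x)$, the continuous-time damping injection, and its Jacobian with respect to $v$ equals $1\neq 0$ there; the implicit function theorem then yields, after shrinking $T^\star$, a unique smooth solution $v_\text{di}^\delta(x)$ on $[0,T^\star[$ with $v_\text{di}^\delta(x)\to -\kappa h_d(x)$ as $\delta\to 0$.

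For asymptotic stability I would substitute $v=v_\text{di}^\delta(x)$ and use (\ref{damp}) to replace $h_d^\delta(x,v_\text{di}^\delta)=-\tfrac{\delta}{\kappa}v_\text{di}^\delta$ in the passivity inequality, obtaining
\begin{align*}
\Delta S_d \leq -\tfrac{\delta}{\kappa}\big(v_\text{di}^\delta(x)\big)^2 \leq 0,
\end{align*}
so that the positive definite $S_d$ (with $S_d(x_\star)=0$) is a Lyapunov function and $x_\star$ is stable. To upgrade to asymptotic stability I would invoke a discrete LaSalle principle on the largest invariant set where $\Delta S_d\equiv 0$: there both nonpositive contributions vanish, forcing $v_\text{di}^\delta(x_k)=0$ and $\int_{k\delta}^{(k+1)\delta}\mathrm{L}_{f_d}S_d=0$ for every $k$. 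The first condition, via the expansion $h_d^\delta(x,0)=\delta\big(h_d(x)+O(\delta)\big)$, pins the trajectory to $\{h_d=0\}$ for $\delta<T^\star$, whereupon the zero-state detectability of $(f_d,h_d)$ excludes any invariant set other than $\{x_\star\}$.

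The step I expect to be the main obstacle is precisely this last invariance argument. Since the IS$_d$M equality matches only the increment of $S_d$ and not the full state, the sampled trajectory on the invariant set is not literally the $f_d$-flow, so one cannot directly feed a continuous zero-output trajectory into the continuous-time detectability hypothesis; the transfer requires showing that $h_d^\delta(x,0)=0$ uniformly implies $h_d(x)=0$, which in turn demands controlling the $O(\delta)$ remainder by a further restriction of $T^\star$ and relating the discrete invariant set to the continuous-time zero dynamics.
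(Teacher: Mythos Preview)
Your proof is correct and follows essentially the same route as the paper: the same telescoping at the intermediate point $x^{+}=x+F^\delta(x,\gamma^\delta(x))$, the discrete-gradient identity for the first bracket, the IS$_d$M equality together with $\mathrm{L}_{f_d}S_d\leq 0$ for the second, and the implicit function theorem via $h_d^\delta(x,v)=\delta h_d(x)+O(\delta^2)$ for solvability of (\ref{damp}). The paper is actually terser on the final LaSalle/ZSD step---it simply substitutes (\ref{damp}) to get $\Delta S_d\leq -\kappa\|h_d^\delta\|^2$ (equivalent to your $-\tfrac{\delta}{\kappa}(v_\text{di}^\delta)^2$) and defers the detectability transfer to \cite{monaco2011nonlinear}---so the obstacle you flag is real but handled by reference rather than by a different argument.
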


\begin{proof}
First, let us prove that the dynamics (\ref{sdequiv}) with output (\ref{out_sd}) is made passive by the IS$_d$M control (\ref{feed_sd}). By Assumption \ref{As1}, $\mathrm{L}_{f_d} S_d(x) \leq 0$ and the feedback  $\gamma^\delta(\cdot)$ solution to (\ref{IS$_d$M}) ensures by construction
$
S_d(x\! +\! F^\delta(x, \gamma^\delta(x))) \! -\! S_d(x) \leq 0.
$
As a consequence, exploiting the properties of the discrete gradient function, one gets the dissipation inequality 
\begin{align*}
&S_d(x\! +\! F^\delta(x, \gamma^\delta(x)\! +\! v)) \! -\! S_d(x\!) \!\\
&=S_d(x\! +\! F^\delta(x, \gamma^\delta(x)\! +\! v)) \! -\! S_d(x\! +\! F^\delta(x, \gamma^\delta(x))) \!\\
&+S_d(x\! +\! F^\delta(x, \gamma^\delta(x) )) \! -\! S_d(x\!) \!
\\ \leq & v  \bar \nabla^\top S_d\big|_{x+F^\delta(x,\gamma^\delta(x))}^{x+F^\delta(x,\gamma^\delta(x)+v)} g^\delta(x,v)  = v h_d^\delta(x,v) 
\end{align*} 
and thus the result. The existence of a solution to the damping equality (\ref{damp}) is guaranteed by the Implicit Function Theorem as in first approximation $h_d^\delta(x,v)=\delta h_d(x) +O(\delta ^2)$. Substituting $v= v^\delta_\text{di}(x)$ into the dissipation inequality above and exploiting \eqref{damp}, one gets $\Delta S_d(x)\!  \leq -\kappa \| h_d^\delta(x,v^\delta_\text{di}(x))\|^2$,
so that asymptotic stability of $x_\star$ follows from ZSD in continuous time; see \cite{monaco2011nonlinear} for a detailed proof.
\end{proof}
\smallskip

The stabilizing controller (\ref{feed_sd}) has two components: the passifying feedback $\gamma^\delta(x)$ matching the closed loop storage behaviour and the output damping feedback $v^\delta_\text{di}(x)$, achieving asymptotic stabilization. In practice, as exact solutions might not be computable, only approximate ones are implemented. Let the $p^\text{th}$-order approximate solution be the truncation of the series expansion at any finite order $p$ in $\delta$; i.e.
\begin{align}\label{trunc}
u^\delta_{[p]}(x)\! =\! \gamma(x)\!-\! \kappa h_d(x)\! + \sum_{i = 1}^p \frac{\delta^i}{(i+1)!}u^i(x) 
\end{align}
with $u^i(x) = \gamma^i(x)+v^i_\text{di}(x)$, the correcting term of order $i>0$. 
In particular, setting $p = 0$, one recovers the standard emulated solution or the continuous-time feedback directly implemented through ZOH devices. 
Along the lines of \cite{tanasa2015backstepping}, it can be shown that increasing the approximation order of the controller (i.e., $p>0$) significantly improves the stabilizing performances in closed loop  by reducing the matching error in (\ref{IS$_d$M}) in $O(\delta^{p+2})$.  
\smallskip

\begin{remark}\label{remark:32a} The exact control solution  $u^\delta(x) = \gamma^\delta(x) + v^\delta_\text{di}(x)$ ensures \emph{one-step consistency} \cite{nevsic1999sufficient} in closed-loop.  This induces that the $p^\text{th}$-order approximate controller ensures practical asymptotic stability  or convergency
 to a ball of radius in $O(\delta^{p+1})$, containing the origin (see \cite{tanasa2015backstepping} for details).
\end{remark}

\subsection{The average passifying output map} 

To stress the link with average passivity introduced in \cite{monaco2011nonlinear}, it is sufficient to notice that the defined passifying output (\ref{out_sd}) coincides by construction with the average along the sampled data closed loop dynamics of the composed map 
\begin{align}\label{outpa}
    \frac{\partial S_d(x+F^\delta(x,\gamma^\delta(x)+v))}{\partial v}
\end{align}
that is 
\begin{align}
\delta h_d^\delta(x,v) =\frac{1}{v}\int_0^v \frac{\partial S_d(x+F^\delta(x,\gamma^\delta(x) + w))}{\partial w}dw.
\end{align}
Some easy computations show that 
 \begin{align*} &
 \frac{\partial S_d(x+F^\delta(x,\gamma^\delta(x)+w))}{\partial w}\! \\ & =  \mathrm{L}_{G^\delta (\cdot, \gamma^\delta(x) + w)}S_d(x+F^\delta(x,\gamma^\delta(x)+w)
\end{align*}
with the vector field  $G^\delta (\cdot, u)$ verifying (see \cite{monaco2010sampled} for details)
\begin{align*}
G^\delta (x+F^\delta(x,u), u)= \frac{\partial F^\delta(x,u)}{\partial u}.
\end{align*}
Accordingly, the following result rephrases Theorem \ref{th:main}.

\smallskip
\begin{theorem}
Given a continuous-time feedback passive dynamics (\ref{dyn1}), with storage function $S_d(\cdot): \mathbb{R}^n \to \mathbb{R}_{\geq 0}$ and passifying output $y=\mathrm{L}_g S_d(x)$,  then, there exists $T^\star>0$ such that for all $\delta \in ]0, T^\star[$ and $k\geq 0$, the digital feedback $u = \gamma^\delta(x) + v$
with $\gamma^\delta(\cdot)$ solution to the IS$_d$M equality (\ref{IS$_d$M}) makes the closed-loop system average passive with the same storage function $S_d(x)$ and  passifying output map (\ref{outpa}).\end{theorem}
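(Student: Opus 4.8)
The plan is to recognize that this statement is Theorem \ref{th:main} recast in the language of average passivity, so that the whole argument reduces to identifying the passifying output \eqref{out_sd} with the average, along the control channel, of the composed map \eqref{outpa}, and then importing the dissipation inequality already established. First I would recall the notion of average passivity from \cite{monaco2011nonlinear}: the sampled-data closed loop is average passive with output $Y(x,v)$ and storage $S_d$ whenever the one-step dissipation $\Delta_k S_d(x)\le v\,\bar Y(x,v)$ holds with the averaged output $\bar Y(x,v)=\tfrac1v\int_0^v Y(x,w)\,\mathrm{d}w$ (in the normalization of \cite{monaco2011nonlinear}). Taking $Y$ to be \eqref{outpa}, the claim becomes: show that its average coincides with $h_d^\delta(x,v)$ of \eqref{out_sd}, and conclude via Theorem \ref{th:main}.

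The central identity is $h_d^\delta(x,v)=\tfrac1v\int_0^v \partial_w S_d\big(x+F^\delta(x,\gamma^\delta(x)+w)\big)\,\mathrm{d}w$, which I would establish by computing the $v$-induced increment of $S_d$ in two ways. On one hand, the defining property of the discrete gradient (Definition \ref{def:1}) together with $v\,g^\delta(x,v)=F^\delta(x,\gamma^\delta(x)+v)-F^\delta(x,\gamma^\delta(x))$ collapses the telescoped difference $S_d(x+F^\delta(x,\gamma^\delta(x)+v))-S_d(x+F^\delta(x,\gamma^\delta(x)))$ exactly to $v\,h_d^\delta(x,v)$. On the other hand, the same difference equals $\int_0^v \partial_w S_d(x+F^\delta(x,\gamma^\delta(x)+w))\,\mathrm{d}w$ by the fundamental theorem of calculus. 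Equating and dividing by $v$ yields the identity, i.e. $h_d^\delta$ is precisely the average of \eqref{outpa}.

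To make the Lie-derivative form explicit, I would then rewrite the integrand by the chain rule as $\nabla^\top S_d(\cdot)\,\partial_u F^\delta(x,u)\big|_{u=\gamma^\delta(x)+w}$ and identify it with $\mathrm{L}_{G^\delta(\cdot,\gamma^\delta(x)+w)}S_d$ through the vector field $G^\delta$ defined by $G^\delta(x+F^\delta(x,u),u)=\partial_u F^\delta(x,u)$. This is where the only genuine technical obstacle sits: $G^\delta$ is well defined only if the sampled map $x\mapsto x+F^\delta(x,u)$ is invertible, and this is exactly the role of the bound $\delta<T^\star$. I would dispatch it by noting that the map is a near-identity transformation whose Jacobian is $I_d+O(\delta)$, hence a diffeomorphism for $\delta$ small enough, so that $G^\delta$ exists and is smooth on the admissible range; the remaining manipulations are routine telescoping and integration.

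Finally I would conclude by invoking Theorem \ref{th:main}: under $u=\gamma^\delta(x)+v$ the closed loop satisfies $\Delta_k S_d(x)\le v\,h_d^\delta(x,v)$ with storage $S_d$. Since $h_d^\delta(x,v)$ has just been shown to be the average of \eqref{outpa} along the control channel, this dissipation inequality is exactly the average-passivity inequality with storage $S_d$ and passifying output \eqref{outpa}, which is the assertion of the theorem.
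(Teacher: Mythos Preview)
Your proposal is correct and follows essentially the same route as the paper, which does not give a separate formal proof but presents the key identity and the $G^\delta$ representation in the discussion immediately preceding the theorem and then declares the result a rephrasing of Theorem~\ref{th:main}. The only discrepancy is a $\delta$ factor: the paper writes $\delta\, h_d^\delta(x,v)=\tfrac1v\int_0^v\partial_w S_d\big(x+F^\delta(x,\gamma^\delta(x)+w)\big)\,\mathrm{d}w$, whereas your computation (correctly, from the discrete-gradient property and the fundamental theorem of calculus) yields $h_d^\delta(x,v)=\tfrac1v\int_0^v\partial_w S_d(\cdots)\,\mathrm{d}w$; this appears to be a normalization convention or a typo in the paper and does not affect the argument.
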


\smallskip

To better  understand the analogy between the continuous-time and sampled-data results, we note that the continuous-time passifying output $y=\mathrm{L}_g S_d(x)$ is the $u$-derivative of the time derivative of the function $S_d(\cdot)$ along the closed loop continuous-time dynamics $\dot x= f(x)+g(x)\gamma(x)+g(x)v$.
\subsection{Some computational aspects}
Computational details are given to characterize the first terms of the exact solutions around the continuous-time solution. 
The sampled-data passifying output (\ref{out_sd}) can be characterized by its series expansion in powers of $\delta$, computing for the first terms
\begin{align*}
h_d^\delta(x,v)& =\delta h_d(x) + \frac{\delta^2}{2}
(\mathrm{L}_{f_d} + v \mathrm{L}_g)(h_d)(x) \\ & + \frac{\delta^2}{2}
 \nabla^\top S_d(x) (\mathrm{L}_g\mathrm{L}_{f}(x) + \gamma(x) \mathrm{L}_g^2(x)) +  O(\delta^3)
\end{align*}
because 
\begin{align*}\delta g^\delta(x,v)\! &=\! \delta g(x) + \frac{\delta^2}{2}(\mathrm{L}_{f_d} +v\mathrm{L_g} )\mathrm{L}_g (x )\\
&+\frac{\delta^2}{2}( \mathrm{L}_g\mathrm{L}_{f}(x) + \gamma(x) \mathrm{L}_g^2(x)) +  O(\delta^3).
\end{align*}
Analogously, for the damping controller $v = v^\delta_\text{di}(x)$ solution to (\ref{damp}), one gets  
\begin{align*}
v^\delta_\text{di}(x) = -\kappa \mathrm{L}_g S_d(x) + \frac{\delta}{2} v_\text{di}^1(x) +O(\delta^2)
\end{align*}
with \begin{align*}
v^1_\text{di}(x) =& \kappa^2 (\mathrm{L}_{f_d} -\kappa \mathrm{L}_g S_d(x)\mathrm{L}_g)\mathrm{L}_g S_d(x) \\ &- \kappa  \nabla^\top S_d(x) (\mathrm{L}_g\mathrm{L}_{f}(x) + \gamma(x) \mathrm{L}_g^2(x)).
\end{align*}

\section{The case of port-Hamiltonian systems}\label{sec:pcH}
Let us now revisit the result in \cite{tiefensee2010ida} in the proposed passivating framework making reference to the discrete gradient function. Let (\ref{dyn1}) be a continuous-time pcH dynamics with
\begin{align}\label{pcS_df}
\dot x= f(x) = (J(x)-R(x))\nabla H(x)
\end{align}
for $J(x) + J^\top(x) = 0$, $R(x) = R^\top(x) \succeq 0$ and smooth Hamiltonian function $H(\cdot) : \mathbb{R}^n \to \mathbb{R}$. In this context, stabilization via passivation results in IDA-PBC design \cite{ortega2002interconnection}. Basically,  one seeks for a control of the form (\ref{passfeed}) to stabilize a desired equilibrium while preserving the pcH structure
so transforming (\ref{dyn2}) into
\begin{align}\label{pcS_df2}
\dot x= f_d(x) = (J_d(x)-R_d(x))\nabla H_d(x)
\end{align}
 that is passive with respect to the output $y_d(x) = g^\top (x) \nabla H_d(x)$ with target Hamiltonian function $H_d(\cdot)$.

\medskip 

From Theorem \ref{th:main}, the proposed sampled-data feedback $\gamma^\delta(\cdot)$ solution to (\ref{IS$_d$M}), assigns in closed loop the target energy function $H_d(\cdot) $ with equilibrium in $x_\star$ as already discussed in \cite{tiefensee2010ida}. However, it can be easily verified that the IH$_d$M based digital feedback does not  preserve the port-Hamiltonian structure in closed loop in general so that it cannot be properly referred as a sampled-data IDA-PBC design. More precisely, the feedback (\ref{feed_sd}) with $v=0$ does not assign to the exact sampled-data model a discrete-time pcH structure along the definition adopted in the literature (e.g. \cite{yalccin2015discrete,aoues2017Hamiltonian,moreschini2019discrete,moreschini2020stabilization}) \begin{align}\label{pcS_dt}
x_{k+1} = x_k + \delta(J_d^\delta(x_k) -R^\delta_d(x_k))\bar \nabla H_d|_{x_k}^{x_k + F_d^\delta(x_k)} 
\end{align}
with $J_d^\delta(x) + \big(J_d^\delta(x) \big)^\top = 0$ and $R^\delta_d(x) = \big(R^\delta_d(x)\big)^\top \succeq 0$. 

The following result shows that the $1^\text{st}$-order approximate feedback
\begin{align*}
\gamma^\delta_{[1]}(x) = \gamma(x) + \frac{\delta}{2} \mathrm{L}_{f_d}(\gamma)(x)
\end{align*}
ensures (local) passivation and recovers in closed loop a  pH structure up to an error in $O(\delta ^3)$.
\smallskip
\begin{theorem}
\label{th_2}
Given a continuous-time pcH dynamics (\ref{pcS_df}) and an IDA-PBC feedback assigning the dynamics (\ref{pcS_df2}) for given matrices $J_d(x)$, $R_d(x)$ and Hamiltonian function $H_d(x)$, let $u_k = \gamma^\delta(x_k)$, be the sampled-data feedback solution to the IH$_d$M equality (\ref{IS$_d$M}). Then, the  closed-loop sampled-data dynamics exhibits in $O(\delta ^3)$ a pcH structure 
\begin{align}\label{eq_1}
 x_{k+1}&=\delta (J_d^\delta(x) - R_d^\delta(x))\bar \nabla H_d|_{x}^{x + F_{d, [2]}^\delta(x)} 
\end{align} 
with \begin{subequations}
\begin{align}
F_{d,[2]}^\delta(x)&=\delta (I_d+ \frac{\delta}{2} J_x[ f_d(x)])M_d(x) \nabla H_d(x)\\
J_d^\delta(x) &= \frac{1}{2}\big(M_d^\delta(x) - (M_d^\delta(x))^\top \big)\label{J}
\\ 
R_d^\delta(x) &= -\frac{1}{2}\big(M_d^\delta(x) + (M_d^\delta(x))^\top \big)\label{Rr}\\
\label{Mdde}
M_d^\delta(x) &= (I_d + \frac{\delta}{2} J_x [f_d(x)]) M_d(x)\times\\ & \big[ I_d +\frac{\delta}{2}\nabla^2 H_d(x) (I_d + \frac{\delta}{2} J_x[f_d(x)])M_d(x)\big]^{-1} \nonumber
\end{align}
\end{subequations}
with $M_d(x) = J_d(x)-R_d(x)$ and $\mathrm{J}_x[\cdot]$, the Jacobian matrix of the function into the brackets. \end{theorem}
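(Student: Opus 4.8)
The plan is to proceed in three stages: first identify the closed-loop sampled-data increment with the second-order flow of the target field $f_d=M_d\nabla H_d$; then recast that increment in discrete-gradient form, which is what pins down the dressed matrix $M_d^\delta$; and finally split $M_d^\delta$ into its skew-symmetric and symmetric parts and check the sign condition by continuity in $\delta$. Throughout I set $S_d=H_d$, so that the IS$_d$M equality \eqref{IS$_d$M} matches the increment of the target Hamiltonian.

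\textbf{Step 1 (state matching).} I would exploit the expansion \eqref{asyIH$_d$M}--\eqref{sd_app} of the IS$_d$M feedback, keeping only $\gamma^\delta(x)=\gamma(x)+\tfrac{\delta}{2}\mathrm{L}_{f_d}\gamma(x)+O(\delta^2)$, and substitute it into the exponential Lie-series form of the sampled-data map \eqref{sdequiv}, i.e. $x+F^\delta(x,u)=e^{\delta(\mathrm{L}_f+u\mathrm{L}_g)}x$. Expanding to second order, the $O(\delta)$ term reconstructs $f_d=f+\gamma g$, while at order $\delta^2$ the contribution $\tfrac12(\mathrm{L}_{f_d}\gamma)\,g$ coming from the $O(\delta)$ part of the feedback combines with the second iterate of $(\mathrm{L}_f+\gamma\mathrm{L}_g)$ to reproduce exactly $\tfrac12\mathrm{L}_{f_d}f_d=\tfrac12\,\mathrm{J}_x[f_d]f_d$. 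Hence
\[
F^\delta(x,\gamma^\delta(x))=\delta f_d(x)+\tfrac{\delta^2}{2}\,\mathrm{J}_x[f_d(x)]\,f_d(x)+O(\delta^3)=F^\delta_{d,[2]}(x)+O(\delta^3),
\]
recalling that $F^\delta_{d,[2]}=\delta(I_d+\tfrac\delta2\mathrm{J}_x[f_d])M_d\nabla H_d=\delta f_d+\tfrac{\delta^2}{2}\mathrm{L}_{f_d}f_d$. The higher-order corrections of $\gamma^\delta$ enter only at $O(\delta^2)$ in the feedback, hence at $O(\delta^3)$ in the increment, so the $1^\text{st}$-order feedback $\gamma^\delta_{[1]}$ already delivers this matching.

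\textbf{Step 2 (discrete-gradient factorization).} Writing $N(x):=(I_d+\tfrac\delta2\mathrm{J}_x[f_d])M_d$ so that $F^\delta_{d,[2]}=\delta N\nabla H_d$, I would insert the first-order discrete-gradient expansion \eqref{discrete:approx}; since $F^\delta_{d,[2]}=O(\delta)$ the remainder is $O(\delta^2)$, giving $\bar\nabla H_d|_x^{x+F^\delta_{d,[2]}}=(I_d+\tfrac\delta2\nabla^2H_d\,N)\nabla H_d+O(\delta^2)$. Choosing $M_d^\delta:=N\big(I_d+\tfrac\delta2\nabla^2H_d\,N\big)^{-1}$, which is precisely \eqref{Mdde} and is well defined for small $\delta$ because the bracket is $I_d+O(\delta)$, one then verifies $\delta M_d^\delta\,\bar\nabla H_d|_x^{x+F^\delta_{d,[2]}}=\delta N\nabla H_d+O(\delta^3)=F^\delta_{d,[2]}+O(\delta^3)$, the extra $O(\delta^2)$ from the discrete gradient being promoted to $O(\delta^3)$ by the overall $\delta$ prefactor. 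Together with Step 1 this establishes the structural representation \eqref{eq_1} to $O(\delta^3)$.

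\textbf{Step 3 (structure, sign, and the crux).} By construction \eqref{J}--\eqref{Rr} make $J_d^\delta$ skew-symmetric and $R_d^\delta$ symmetric with $J_d^\delta-R_d^\delta=M_d^\delta$, so it only remains to check $R_d^\delta\succeq0$ locally. Since $M_d^\delta\to M_d=J_d-R_d$ as $\delta\to0$, with $J_d$ skew and $R_d=R_d^\top\succeq0$, one obtains $R_d^\delta=-\tfrac12(M_d^\delta+(M_d^\delta)^\top)=R_d+O(\delta)$, whence definiteness is preserved for $\delta$ small enough (strictly when $R_d\succ0$, otherwise in the approximate $O(\delta)$ sense underlying the wording of the statement), consistently with the target form \eqref{pcS_df2}--\eqref{pcS_dt}. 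I expect the main obstacle to lie in Step 2: recognizing that the Hessian correction $\tfrac12\nabla^2H_d\,F^\delta_{d,[2]}$ inherent to the discrete gradient must be reabsorbed through the matrix inversion in \eqref{Mdde} to retain the pcH factorization, and tracking the orders carefully so that this correction leaves only an $O(\delta^3)$ residue while keeping the inverse well-posed. Step 1 is a bookkeeping Lie-series computation anchored to \eqref{sd_app}, and Step 3 is a continuity argument.
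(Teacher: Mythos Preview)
Your proposal is correct and follows essentially the same route as the paper: expand the closed-loop sampled increment to second order in $\delta$, use the discrete-gradient approximation \eqref{discrete:approx} to factor it as $\delta M_d^\delta\,\bar\nabla H_d$, and then conclude $R_d^\delta\succeq 0$ by continuity from $R_d$. The only notable difference is that your Step~1 actually carries out the Lie-series bookkeeping showing that the first-order corrected feedback $\gamma^\delta_{[1]}$ reproduces $F_{d,[2]}^\delta$ to $O(\delta^3)$, whereas the paper simply asserts this identification ``by definition''; your version is therefore slightly more complete on that point.
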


\smallskip
\begin{proof}
By definition,  the approximation in $O(\delta^3)$ of  the exact sampled-data equivalent to (\ref{pcS_dt}) for $v=0$ gives
\begin{align*}
x_{k+1} &= x_k +F_{d,[2]}^{\delta}(x_k)=f_d(x_k)+\mathrm{L}_{f_d}^2(x_k)
\end{align*}
with $ f_d(x) = M_d(x) \nabla H_d(x)$ and
$ \mathrm{L}_{f_d}^2(x)=\big( \mathrm{J}_x [f_d(x)] \big) M_d(x) \nabla H_d(x)$.
On the other side, from (\ref{discrete:approx}), one computes
\begin{align*}
 \bar \nabla H_d|_{x}^{x + F_{d, [2]}^\delta(x)}  =& 
Q_d^\delta(x) \nabla H_d(x) + O(\delta^3).
\end{align*}
with $Q_{d}^\delta(x) = \text{I} + \frac{\delta}{2}\nabla^2 H_d(x) (\text{I} + \frac{\delta}{2} J_x [f_d(x)])M_d(x)$.
Substituting these expressions into (\ref{eq_1}), one easily gets that it is solved in $O(\delta^3)$ by setting
$M_d^\delta(x)$ as in (\ref{Mdde}) and (\ref{J})-(\ref{Rr}). Accordingly, because $R_d^\delta(x) = R_d(x) + \delta \tilde R_d^\delta(x)$ one gets that $R_d^\delta(x) \succeq 0$ for $\delta$ small enough and thus the result.
\end{proof}


\section{The gravity pendulum as an example} \label{sec:ex}
Consider the damped pendulum actuated by the torque $u$ and described in port-Hamiltonian form by the equation
\begin{align}\label{pend}
\begin{pmatrix}
\dot{q}\\\dot{p}
\end{pmatrix}&=\begin{pmatrix}
0 & 1\\-1 & -r
\end{pmatrix}\nabla H(q, p)+ \begin{pmatrix}
0\\1
\end{pmatrix}u
\end{align}
with $q$ and $p$ the angular displacement from the vertical axes and velocity, $H(x)=\frac{1}{2}p^2 + (1-\cos(q))$ the Hamiltonian function, $r>0$ the viscous damping coefficient. Setting $x = (q, \ p)^\top$ and $x_\star = (q_\star, 0)^\top$ the desired equilibrium, \eqref{pend} satisfies Assumption \ref{As1} with  $u = \gamma(x) + v$ and
\begin{align}\label{contu}
\gamma(x)= \sin(q) - \sin(q-q_\star),
\end{align}
	making the closed loop
	\begin{align*}
	\begin{pmatrix}
	\dot{q}\\\dot{p}
	\end{pmatrix}&=\begin{pmatrix}
	0 & 1\\-1 & -r
	\end{pmatrix}\nabla H_d(q, p) + \begin{pmatrix}
	0\\1
	\end{pmatrix}v, \quad y = p
	\end{align*}
	passive with desired storage function
	\begin{align}\label{desired}
	H_d(x)=\frac{1}{2}p^2 + (1-\cos(q-q_\star)).
	\end{align}
	The Hamiltonian function ${H}_d(x)$ along the closed-loop dynamics verifies
$
\dot{H}_d(x)= -rp^2 + vy.
$
As discussed in Section \ref{sec:main}, the sampled equivalent model (\ref{sdequiv}) is approximated in $O(\delta^3)$ as 
\begin{align}\label{sd_dyn}
&\begin{pmatrix}
q_{k+1}\\p_{k+1}
\end{pmatrix}=\begin{pmatrix}
q_{k}\\p_{k}
\end{pmatrix}+\delta\begin{pmatrix}
p_k \\-\sin(q_k) -rp_k
\end{pmatrix}\\ 
&+ \frac{\delta^2}{2} \begin{pmatrix}
              - \sin(q_k) - rp_k\\
r(\sin(q_k) + rp_k) - p_k\cos(q_k)
\end{pmatrix} \nonumber + \delta\begin{pmatrix}
\frac{\delta}{2}\\1 -\frac{\delta}{2}r
\end{pmatrix}  u_k. \nonumber
\end{align} 
One computes the $1^{st}$-order approximated feedback  (\ref{trunc}) 
\begin{align}\label{control2}
u_k=\gamma(x_k) + \frac{\delta}{2}\gamma_1(x_k) + v_k, 
\end{align}
with
$\gamma_1(x)= (\cos(q) - \cos(q \!-\! q_\star))p$. It follows that the sampled-data dynamics is passive with storage function \eqref{desired} and passifying output given by
\begin{align*}
	h_d^\delta(x_k,v_k)= \delta p_k + \frac{\delta^2}{2}(v_k - 2rp_k) + O(\delta^3).
\end{align*} 
Accordingly, the closed-loop system  under \eqref{control2} verifies the dissipation inequality
\begin{align*}
H(x_{k+1}) - H(x_k) \leq \delta p_kv_k + \frac{\delta^2}{2}(v^2_k - 2rp_kv_k) + O(\delta^3).
\end{align*} 
In addition, from Theorem \ref{th:main} the implicit damping equality
\begin{align*}
\delta v_k =- \delta \kappa p_k - \frac{\delta^2}{2}(\kappa v_k - 2\kappa rp_k) + O(\delta^3),\  \kappa >0
\end{align*}
can be solved  in $ O(\delta^3)$ with
\begin{align*}
v_{\text{di}}(x) = -  \kappa p + \frac{\delta}{2} \kappa (2r + \kappa) p + O(\delta^2)
\end{align*}
which ensures convergence to a ball containing $x_\star$ of radius in $O(\delta^{2})$.
From Theorem \ref{th_2},  under the feedback \eqref{control2} with $v_k=0$, the dynamics \eqref{sd_dyn} gets the port-Hamiltonian form \eqref{pcS_dt} in  $O(\delta^{3})$; i.e. (\ref{eq_1}) is satisfied with
\begin{align*}
F_{d,[2]}^\delta(x)=&\delta\!\begin{pmatrix}
p \\-\sin(q\!-\! q_\star) -rp
\end{pmatrix}\\
&+\! \frac{\delta^2}{2}\!\begin{pmatrix}
-\sin(q\!-\! q_\star) -rp\\r\sin(q \!-\! q_\star) +p (r^2 - \cos(q \!-\! q_\star))
\end{pmatrix}\\ 
	M_d^\delta(x) =& \begin{pmatrix}
	0 & 1 \\ -1& -r
	\end{pmatrix}, 	J_d^\delta = \begin{pmatrix}
	0 & 1 \\ -1& 0
	\end{pmatrix}, 	R_d^\delta = \begin{pmatrix}
	0 & 0 \\ 0 & r
	\end{pmatrix}.
\end{align*}
Setting $\nu = (\nu_1 \ \nu_2)^\top$, $\mu = (\mu_1 \ \mu_2)^\top$, the discrete gradient
\begin{align*}
\bar\nabla H_d|_{\nu}^{\mu}= \begin{pmatrix}
-\frac{\cos(\mu_1-q_\star)-\cos(\nu_2-q_\star)}{\mu_1-\nu_1}
\\ 
\frac{1}{2} (\mu_2+\nu_2)
\end{pmatrix}
\end{align*} 
 along the closed loop dynamics is approximated in $O(\delta^{2})$ as 
\begin{align*}
\bar\nabla H_d|_{x}^{x + F_{d,[2]}^\delta(x)}=&\begin{pmatrix}
\sin(q\!-\!q_\star)\\ p
\end{pmatrix} + 
\frac{\delta}{2}\begin{pmatrix}
\cos(q\!-\!q_\star)p \\ -\sin(q\!-\!q_\star) - rp
\end{pmatrix}.
\end{align*}

\emph{Simulations.} Setting $x_\star=col(\frac{\pi}{2}, 0)$, initial condition $x_0=col(0, 0)$, and $r=0.4$, simulations are reported. Fig.\ref{fig:4b} compares the effect of the continuous-time feedback \eqref{contu}, $0$-order approximated feedback (emulation of \eqref{contu}) and $1^\text{st}$-order approximate control in \eqref{control2} applied to the pendulum system \eqref{pend}. The purpose is to show their performances over the desired Hamiltonian function \eqref{desired} and the closed-loop trajectories for $\delta=1$ s. Fig.\ref{fig:4b} highlights that the proposed solution significantly approaches both the continuous-time trajectories and the Hamiltonian $H_d(\cdot)$ while the emulated feedback suffers the step-size of $\delta=1s$ and is not able to match neither the desired Hamiltonian nor to stabilize the desired equilibrium. Fig.\ref{fig:3b} compares the Root Mean Squared Error (RMSE) in $H_d(\cdot)$ under approximate solutions (\ref{trunc}) for $\delta\in[0.05: 0.05: 1.5]$, $p = 1$, $p = 0$ respectively with $v = 0$. Improvement is clear even for small values of $\delta$.

\begin{figure}[t!]
	\centering
	\includegraphics[width=.45\textwidth]{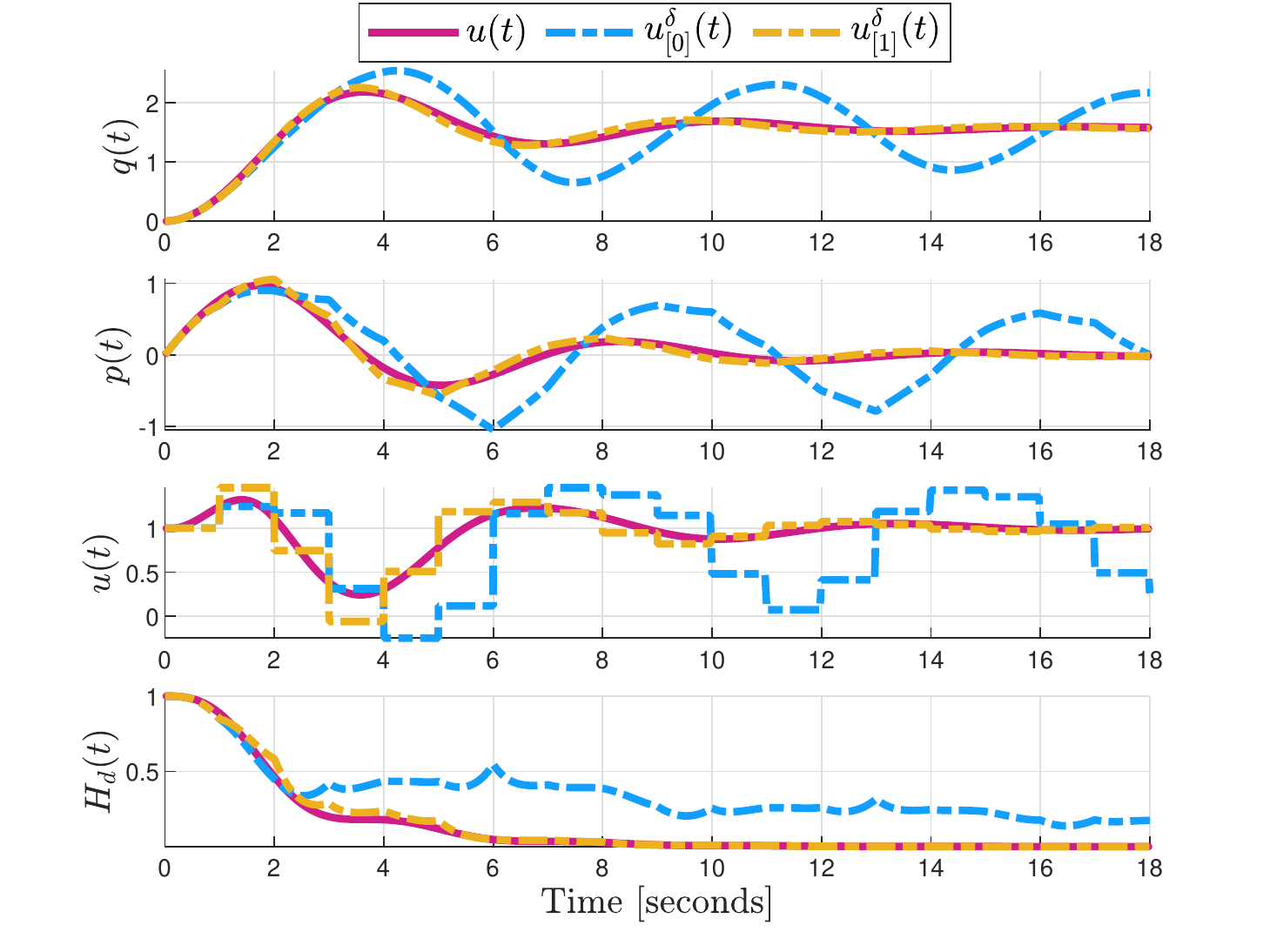}  
	\caption{Continuous-time design $\gamma(x)$, emulated feedback $u^{\delta}_{[0]}(x)$, and the $1$-order approximated feedback $u^{\delta}_{[1]}(x)$, for $\delta=1$ and $\kappa = 0.1$.}
	\label{fig:4b}
\end{figure}
\begin{figure}[t!]
	\centering
	\includegraphics[width=.38\textwidth]{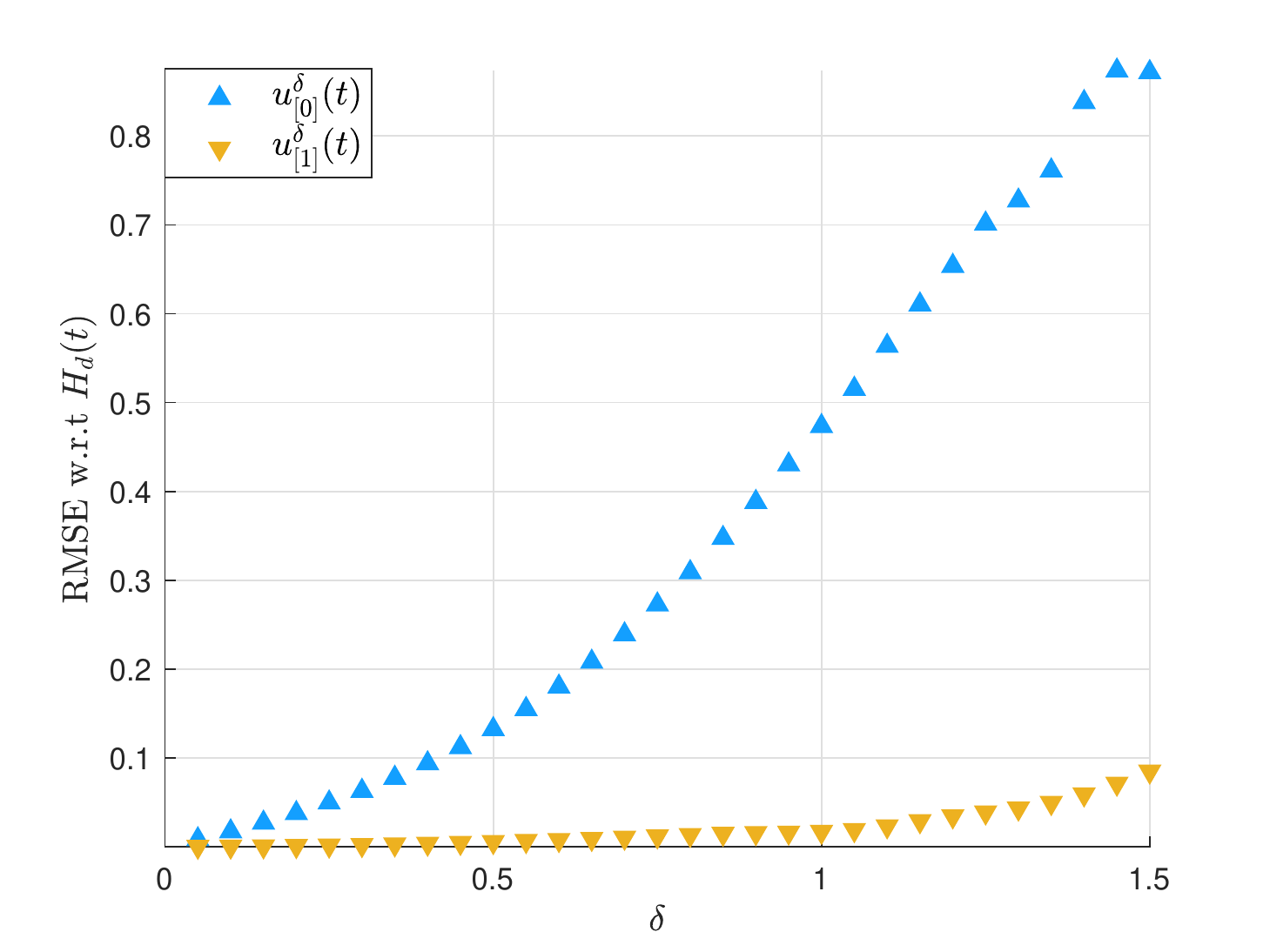}  
	\caption{Matching error for $\delta\in[0.05, 1.5]$.}
	\label{fig:3b}
\end{figure} 

\section{Conclusions and Perspectives} \label{sec:conc}
It has been shown that feedback passivation can be preserved under sampled-data control with respect to the same target storage as in continuous time and suitably modified output mapping. An interpretation of the modified output mapping is given in terms of averaging. The result is appealing for systems that may not be passive in open loop. New perspectives concern revisiting under sampling and in this unifying framework  all continuous-time time design strategies involving feedback passivation as an instrumental tool as for instance for networked systems and energy transfer management among the different ports.
When applied to pcH dynamics, damping assignment is achieved but a sampled-data port Hamiltonian structure is recovered in $O(\delta^3)$ only. Further work is toward generalizing this result according to a deeper understanding of the geometric structure behind  discrete-time pcH forms. 
\bibliographystyle{IEEEtran}      
\bibliography{biblio}                  

\end{document}